\documentclass[runningheads]{llncs}
\usepackage[left]{lineno} % 'switch' option works with two-column format
\usepackage[T1]{fontenc}

\usepackage{tikz,pgfplots}
\pgfplotsset{compat=1.18} 

\usepackage{graphicx}
\usepackage{hyperref}
\usepackage{amsmath}
\usepackage[english,rules]{ALgo}
\usepackage{float}

\usepackage{comment}

\usepackage{graphicx} % Required for inserting images

\newtheorem{obs}{Observation}

\def\per{\textit{per}}
\def\Border{\textit{Border}}
\def\border{\textit{border}}
\def\pp{\mathinner{\ldotp\ldotp}}	% two dots

\def\suflen{\textit{k}}
\def\period{\textit{period}}
\def\pos{\textit{pos}}
\def\listpos{\textit{listPos}}
\def\gap{\textit{gap}}
\def\mxbar{\bar{m}}
\def\cumsum{\dot{x}}

\def\sa#1{\mbox{\tt #1}}

\author{Thierry Lecroq \inst{1} \and Francesco Pio Marino\inst{1,2}}

\institute{
Univ Rouen Normandie, INSA Rouen Normandie, Universit\'e Le Havre Normandie, Normandie Univ, LITIS UR 4108, CNRS NormaSTIC FR 3638, IRIB,\\ Rouen F-76000, France\\
\and
Universit\`a di Catania, Dipartimento di Matematica e Informatica,\\ viale A.Doria n.6, 95125, Catania, Italia }

\authorrunning{T. Lecroq \and F.P. Marino}

\title{Efficient Computation of Periods and Covers Using Sampling}
%\title{Fast computation of the period and of the shortest cover of a string using its Characters-Distance-Sampling representation}

\begin{document}
%\linenumbers 

\maketitle
\begin{abstract}
    Identifying regularities in strings, such as \emph{periods} and \emph{covers}, is crucial for applications in text compression, computational biology, and pattern recognition. \emph{Characters-Distance-Sampling} (\texttt{CDS}) is an efficient technique that encodes a string by storing distances between selected pivot characters, accelerating string-processing tasks.

    We apply \texttt{CDS} to compute periods and shortest covers, selecting only the first character as the pivot. This strategy yields optimized computations, achieving speedups of $38\%$--$43\%$ for period computation and $63\%$--$72\%$ for cover detection. These results demonstrate the potential of \texttt{CDS}-based representations for efficient string analysis and broader applications.
\end{abstract}

\section{\label{sect-intro}Introduction}

String analysis is fundamental in various domains, including literature, linguistics, computer science, and computational biology, where DNA, RNA, and protein sequences are represented as character chains. Identifying structural regularities in strings is essential for efficient data processing, pattern recognition, and compression. Among these, \emph{periods}~\cite{Thue06} and \emph{quasiperiodicities} (\emph{covers})~\cite{ApostolicoE90,ApostolicoFI91,MhaskarS22} are widely studied in combinatorial pattern matching, with applications in text indexing, bioinformatics, and data compression.

%Understanding the structural properties of strings and identifying regularities within them is crucial for efficient data processing, pattern recognition, and compression. These notions are widely explored in combinatorial pattern matching and have applications in text indexing, bioinformatics, and data compression.

In recent years, \emph{Characters-Distance-Sampling} (\texttt{CDS}) has been successfully applied in string processing tasks such as exact pattern matching~\cite{FaroM20,FaroMP21,FaroMP23}. \texttt{CDS} represents a string by storing the distances between occurrences of selected \emph{pivot characters}, typically leading to a more compact representation of the string. However, despite its effectiveness in searching and indexing, \texttt{CDS} has not yet been leveraged for computing string regularities.

In this paper, we present the first application of \texttt{CDS} to the computation of fundamental string regularities, specifically the \emph{period} and the \emph{shortest cover} of a string. We develop efficient algorithms that operate directly on the \texttt{CDS} representation of a string, significantly improving computational efficiency compared to classical methods. Our experimental results demonstrate that these methods achieve substantial speedups over traditional approaches.

%The remainder of this paper is organized as follows. Section~\ref{sect-pre} introduces fundamental concepts and notations. Section~\ref{sect-period} presents our method for computing the period of a string using its \texttt{CDS} representation. In Section~\ref{sect-cover}, we describe how to compute the shortest cover of a string using the same representation. Experimental results are detailed in Section~\ref{sect-expe}, followed by our conclusions and future research directions in Section~\ref{sect-conc}.

\section{\label{sect-pre}Preliminaries}

A \emph{string} is a sequence of zero or more characters drawn from a finite \emph{alphabet} $\Sigma$, with $\sigma = |\Sigma|$ denoting the number of distinct characters. The empty string, containing zero characters, is denoted by $\varepsilon$, and the set of all finite strings over $\Sigma$ is $\Sigma^*$. A string $x$ of length $m$ is represented as $x[0\pp m-1]$, where $x[i] \in \Sigma$ for $0 \leq i \leq m-1$, where $|x|=m$ is the length of the string. 

A string $u$ is a \emph{prefix} of $x$ if $x = uw$ for some $w \in \Sigma^*$, and a \emph{suffix} if $x = wu$. If $u$ is both a prefix and a suffix of $x$ but $u \neq x$, it is called a \emph{border}. By definition, the empty string is a border of any non-empty string. A string may have multiple borders, with the longest referred to as \textit{the border} of $x$, denoted by $\Border(x)$. 

An integer $p$ ($1 \leq p \leq |x|$) is a \emph{period} of $x$ if $x[i] = x[i + p]$ for all $0\leq i < |x| - p$. The smallest such $p$ is called \emph{the period} of $x$, denoted by $\per(x)$. It is well known that $x$ has period $p$ if and only if it has a border of length $|x| - p$~\cite{SmythY09}. Specifically, for a non-empty string $x$:
$
|x| = \per(x) + |\Border(x)|.
$

Borders and periods are dual concepts since each period corresponds to a border, and vice versa. All borders of a non-empty string $x$ can be found recursively through the borders of its longest border. In online string matching, a string’s period determines the optimal shift length when an occurrence is found.

\begin{comment}
An alternative definition of periodicity uses occurrences of a smaller string $u$ within $x$. Let $\pos(u, x)$ be the ordered list of all starting positions where $u$ appears in $x$:
$
\pos(u, x) = (i \mid x[i\pp i+|u|-1] = u).
$

The set of distances between consecutive occurrences of $u$ is:
$
\gap(u, x) = \{\pos(u, x)[k+1] - \pos(u, x)[k] \mid 0\leq k < |\pos(u, x)| - 1\}.
$

Then $|u|$ is the period of $x$ if $0 \in \pos(u, x)$, all elements of $\gap(u, x)$ are equal to $|u|$, and the remaining suffix after the last occurrence of $u$ is a prefix of $x$. This guarantees that every character of $x$ is covered by an occurrence of $u$, except possibly for the last few characters, which should align with a prefix of $u$.
\end{comment}

A related concept is that of \emph{covers}. A string $w$ of length $\ell$ is a \emph{cover} (or \emph{quasiperiod}) of a string $x$ of length $m$ if every position in $x$ is included in some occurrence of $w$.
Let $\pos(u, x)$ be the ordered list of all starting positions where $u$ appears in $x$:
$
\pos(u, x) = (i \mid x[i\pp i+|u|-1] = u)
$. Then
$
\{0, |x| - |w|\} \subseteq \pos(w, x)
$
if $w$ is a cover of $x$.

This ensures that every character of $x$ is contained within at least one occurrence of $w$, though occurrences may overlap. A string can have multiple covers, with the shortest among them called the \emph{shortest cover}. If a string $x$ has no proper cover (i.e., its only cover is itself), it is called \textit{superprimitive}; otherwise, it is \textit{quasiperiodic}.

\section{Characters-Distance-Sampling in Brief}\label{sec:cds}

In this section, we provide a concise overview of the methodology used to construct a partial index in the \texttt{CDS} setting.

Let $x$ be a string of length $m$ over an alphabet $\Sigma$ of size $\sigma$. Assuming that all strings are treated as vectors starting at position $0$, we denote $x[i]$ as the $(i+1)$-th character of $x$, for $0 \leq i < m$.

To build the partial index, a subset $C \subseteq \Sigma$ is chosen as the \emph{set of pivot characters}. The input string $x$ is then sampled by computing the distances between consecutive occurrences of any pivot character $c \in C$. 

Formally, given the function $\delta:\{0, \dots, \dot{m}-1\} \rightarrow \{0, \ldots, m-1\}$, where $\delta(i)$ is the position of the $(i+1)$-th occurrence a pivot character in $x$, we first obtain the \emph{position-sampled} version of $x$, indicated by $\dot{x}$, as the numeric sequence of length $\dot{m}$, defined as 
$	\dot{x} = \langle \delta(0), \delta(1), \ldots, \delta(\dot{m}-1) \rangle$.

Then, given the function $\Delta$, where $\Delta(i) = \delta(i+1)-\delta(i)$ for $0 \leq i < \dot{m}-1$, is the distance between two consecutive occurrences of any pivot character in $x$, we get the \emph{character-distance sampled} version of the text $x$, indicated by $\bar{x}$, as the numeric sequence of length $\bar{m}=\dot{m}-1$, defined as 
$\bar{x} = \langle \Delta(0), \Delta(1), \ldots, \Delta(\bar{m}-1) \rangle = \langle \delta(1)-\delta(0), \delta(2)-\delta(1), \ldots, \delta(\dot{m}-1)-\delta(\dot{m}-2) \rangle$.

For example, let $x =$ \texttt{agaacgcagtata} be a \textsc{DNA} sequence of length 13, over the alphabet $\Sigma=\{$\texttt{a}, \texttt{c}, \texttt{g}, \texttt{t}$\}$. Let $C =\{$\texttt{a}$\}$ be the set of pivot characters. Thus, the position-sampled version of $x$ is $\dot{x} = \langle0,2,3,7,10,12\rangle$. The character-distance sampling version of $x$ is $\bar{x} = \langle 2,1,4,3,2 \rangle$. Specifically, $\bar{x}[0]=\Delta(0)=\delta(1)-\delta(0)=2-0=2$, while $\bar{x}[2]=\Delta(2)=\delta(3)-\delta(2)=7-3=4$, and so on.

When searching for a pattern $p$, in a larger text $t$, a preprocessing step computes their sampled versions $\bar{p}$ and $\bar{t}$, after which any string matching algorithm can be applied to find occurrences of $\bar{p}$ in $\bar{t}$ (see \cite{FaroMP20} for details). 
In its original design, the algorithm maintains the position-sampled version $\dot{t}$ and dynamically computes $\bar{t}$ during the search. This approach balances memory usage, as storing distances or positions requires the same space, but retaining positions allows direct access to the original text for verification.

However, in this formulation, each element of the sampled text requires 4 bytes of memory. An alternative implementation~\cite{fake24} reduces memory usage by shifting from position-based to distance-based representation, introducing the concept of \emph{fake samples}. Here, distances exceeding a fixed threshold are decomposed into a sequence of smaller distances with inserted fake samples. This method reduces memory consumption, allowing representation with just 8 bits per element (for a threshold of 256).

Experimental results indicate that the additional space required to store the sampled text ranges from $2.5\%$ to $20\%$, depending on the pattern length~\cite{fake24}. Furthermore, recent studies demonstrate that the Character Distance Sampling algorithm enables any exact string matching procedure to be used as the search method~\cite{FaroMM24}. Notably, search times are reduced by up to $96\%$ compared to applying the same procedure to the non-sampled version of the text.
\subsection{Classical Computation of the Period of a String}

A fundamental approach for computing the period of a string $x$ relies on determining the length of the border of all its prefixes. The key observation is that the length of the longest border of $x$ directly determines its period.

The \emph{border array} of a non-empty string $x$ of length $m$ is an array of size $m+1$ defined as follows:

$$
\border_x[0] = -1, \quad \border_x[i] = |\Border(x[0\pp i-1])| \quad \text{for } 1\le i \le |x|.
$$

The border array can be computed in linear time using the classical algorithm presented in~\cite{chl2007}.
This structure allows the period of the string to be determined efficiently using the relation $\per(x) = m - \border_x[m]$, where $m$ is the length of $x$. 
%\figurename{}~\ref{fig:algo-borders} 

\begin{comment}
\begin{figure}
    \begin{center}
\begin{algo}{Borders}{x,m}
\SET{\border_x[0]}{-1}
\SET{i}{0}
\DOFORI{j}{1}{m-1}
    \SET{\border_x[j]}{i}
    \DOWHILE{i> -1 \mbox{ and } x[i]\neq x[j]}
        \SET{i}{\border_x[i]}
    \OD 
    \SET{i}{i+1}
\OD
\SET{\border_x[m]}{i}
\RETURN{\border_x}
\end{algo}
    \end{center}
    \caption{\label{fig:algo-borders}
    Classical computation of the border array of string $x$ of length $m$.
    }
\end{figure}
\end{comment}

\figurename{}~\ref{fig:example-borders} illustrates the computation of the border array for the string $x = \sa{abaababaaba}$. The border array $\border_x$ encodes structural information about the string by tracking the longest proper border of each prefix of $x$. 

\begin{comment}
For this specific string, the border array values are given by:

\[
\border_x = (-1, 0, 0, 1, 1, 2, 3, 2, 3, 4, 5, 6).
\]
\end{comment}

Each entry $\border_x[i]$ represents the length of the longest border for the prefix $x[0 \pp i-1]$. This means that at position $i$, the prefix $x[0\pp i-1]$ contains a repeated pattern of length $\border_x[i]$. 

For instance, the prefix $x[0 \pp 4] = \sa{abaab}$ has a border of length $2$, meaning that its first two characters match its last two characters. Similarly, the prefix $x[0 \pp 10] = \sa{abaababaaba}$ has a border of length $6$, indicating that the first six characters match the last six. 

\begin{figure}[H]
    \begin{center}
\begin{tabular}{@{}l@{\quad}*{13}{p{9pt}}l@{}}
$i$             &0&1&2&3&4&5&6&7&8&9&10&11 \\
\hline
$x[i]$  &\sa a&\sa b&\sa a&\sa a&\sa b&\sa a&\sa b&\sa a&\sa a&\sa b&\sa a\\
\hline
$\border_x[i]$    &-1&0&0&1&1&2&3&2&3&4&5&6  \\  
\end{tabular}
    \end{center}
    \caption{\label{fig:example-borders}
    Border array of $x=\sa{abaababaaba}$ of length $11$.
    }
\end{figure}

%\figurename{}~\ref{fig:example-borders} shows how the period of the string $x = \sa{abaababaaba}$ is computed. The border array $\border_x$ provides crucial information, where each entry $\border_x[i]$ represents the length of the longest proper border of the prefix $x[0\pp i-1]$. 

For the full string of length $m = 11$, the border array value at index $m$ is given by $\border_x[11] = 6$. Since the period of a string is calculated as $m - \border_x[m]$, we derive:
$
\per(x) = 11 - \border_x[11] = 11 - 6 = 5.
$

This result confirms that $x$ repeats itself every $5$ characters, meaning the shortest repeating unit in $x$ has length $5$.

%%%%%%%%%%%%%%%%%%%%%%%%%%%%%%%%%%%%%%%%%%%%%%%%%%%%%%%%
\subsection{Classical Computation of the Shortest Cover of a String}

Since a cover is inherently a border, computing the shortest cover of a string requires first identifying all its non-empty borders. For a string $x$ of length $m$ let us define the sequence of borders recursively as follows:

$$
\Border^0(x) = x, \quad \Border^i(x) = \Border(\Border^{i-1}(x)) \text{ for } i > 0.
$$
The sequence of non-empty borders is then:

$$
\Border^1(x), \Border^2(x), \dots, \Border^k(x) \ne \varepsilon, \quad \text{such that } \Border^{k+1}(x) = \varepsilon.
$$
The shortest cover of $x$ is the smallest non-empty border $\Border^j(x)$ that covers all the borders $\Border^g(x)$ for $0 \le g < j$.

The algorithm starts by computing the shortest border and iteratively checks whether each border covers longer ones. If a border $\Border^j(x)$ fails to cover some $\Border^h(x)$ with $h < j$, the process resumes with $j = h$. To ensure a linear-time complexity, when a border $w$ of a prefix $x[0\pp n-1]$ has length $\ell > m/2$, its length is adjusted as:
$
\ell = n - \ell + m \bmod (n - \ell).
$
This effectively reduces $w$ to $uu'$, where $w = u^r u'$ and $u'$ is a prefix of $u$.

\begin{comment}
    
\begin{figure}[!t]
    \begin{center}
\begin{tabular}{@{}l@{\quad}*{13}{p{9pt}}l@{}}
$i$             &0&1&2&3&4&5&6&7&8&9&10&11 \\
\hline
$x[i]$  &\sa a&\sa b&\sa a&\sa a&\sa b&\sa a&\sa b&\sa a&\sa a&\sa b&\sa a\\
\hline
$\border_x[i]$    &-1&0&0&1&1&2&3&2&3&4&5&6  \\  
\end{tabular}
    \end{center}
    \caption{\label{fig:example-covers}
    Computation of the shortest cover of $x=\sa{abaababaaba}$.
    }
\end{figure}
\end{comment}

%From \figurename{}~\ref{fig:example-covers}, we analyze how the shortest cover of the string $x = \sa{abaababaaba}$ is determined. A cover of a string is a repeating pattern that occurs frequently enough to cover every position in the string. 

For instance, to compute the shortest cover, we first extract all the non-empty borders of $x$. The decreasing sequence of the non-empty border lengths for $x[0 \pp 11] = \sa{abaababaaba}$  is:

$$
(\border_x^1[|x|] = 6, \border_x^2[|x|] = 3, \border_x^3[|x|] = 1).
$$

This means that sequence of borders in increasing length of $x$ is: $(\sa{a}, \sa{aba}, \sa{abaaba}).$

For a string to be a cover, every occurrence of the border must occur in a way that ensures complete coverage of $x$. Here, the smallest border, $\sa{a}$, is not sufficient to cover $\sa{aba}$. However, the border $\sa{aba}$ does cover both $\sa{abaaba}$ and the full string $x$. 
Thus, the shortest cover of $x$ is: $\sa{aba}.$
This result confirms that the minimal repeating unit that effectively covers $x$ is of length $3$.

\section{Computing the period of a string from its \texttt{CDS} representation}
\label{sect-period}

This section presents an efficient method for computing the period of a string using its \texttt{CDS} representation. We first consider the case where the alphabet consists of only two distinct characters, meaning $\sigma = |\Sigma| = 2$, thus $\Sigma=\{a,b\}$.
Let $a$ be the only pivot character.
Let $x$ be a string of length $m$. The function $\delta$ denotes the position of occurrences of a designated pivot character in $x$. The \texttt{CDS} representation of $x$ is denoted by $\bar{x}$, and its length is given by $\bar{m} = |\bar{x}|$.

The key idea is that the periodic structure of $\bar{x}$ provides enough information to determine the period of $x$. Before proving this, we present two auxiliary results that establish the fundamental relationships between $x$ and its \texttt{CDS} representation.

From any position $\delta(i)$ of the pivot character $a$, all positions 
 in $\delta$ greater than $\delta(i)$ correspond to $\delta(i)$ plus a sum of elements from $\bar{x}$. These positions also contain $a$.

\begin{lemma}\label{lemma-sum-right}
Let $x[\delta(i)] = a$ for $0 \leq i \leq \bar{m}-1$. Then:
\[
x[\delta(i) + \sum_{j=i}^{k} \bar{x}[j]] = a, \quad \text{for } i \leq k \leq \bar{m}-1.
\]
\end{lemma}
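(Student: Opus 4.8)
The plan is to recognize that the sum $\sum_{j=i}^{k} \bar{x}[j]$ telescopes, since each entry of the \texttt{CDS} representation is by definition a difference of consecutive pivot positions. First I would unfold the definition $\bar{x}[j] = \Delta(j) = \delta(j+1) - \delta(j)$ and substitute it into the sum:
\[
\sum_{j=i}^{k} \bar{x}[j] = \sum_{j=i}^{k} \bigl(\delta(j+1) - \delta(j)\bigr).
\]
The intermediate terms cancel in pairs, so the sum collapses to $\delta(k+1) - \delta(i)$.

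Next I would add $\delta(i)$ to this expression, obtaining $\delta(i) + \sum_{j=i}^{k} \bar{x}[j] = \delta(k+1)$. Hence the position indexed in the statement of the claim is exactly $\delta(k+1)$, the position of the $(k+2)$-th occurrence of the pivot character $a$ in $x$. By the very definition of $\delta$ as the function returning pivot positions, we have $x[\delta(k+1)] = a$, which is precisely the desired conclusion.

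The only point requiring any care is to verify that $\delta(k+1)$ is well-defined, i.e.\ that $k+1$ lies in the domain of $\delta$. Since the hypothesis gives $k \le \bar{m}-1$ and we have $\bar{m} = \dot{m}-1$, it follows that $k+1 \le \bar{m} = \dot{m}-1$, so $\delta(k+1)$ is indeed a valid pivot position. Beyond this routine index check there is no genuine obstacle: the lemma is a direct consequence of the telescoping structure inherent in character-distance sampling, and the argument is essentially a one-line computation once the telescoping is made explicit.
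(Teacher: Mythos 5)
Your proof is correct and follows essentially the same route as the paper's: unfold $\bar{x}[j]=\delta(j+1)-\delta(j)$, telescope the sum to $\delta(k+1)-\delta(i)$, and conclude that the indexed position is the pivot position $\delta(k+1)$. The only difference is that you explicitly verify $k+1$ lies in the domain of $\delta$, a detail the paper leaves implicit.
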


\begin{proof}
Expanding the sum of distances between consecutive pivot occurrences, we obtain:
\[
x[\delta(i) + \sum_{j=i}^{k} \bar{x}[j]] = x[\delta(i) + (\delta(i+1) - \delta(i)) + \dots + (\delta(k+1) - \delta(k))].
\]
Since $\delta(k+1)$ corresponds to the position of the pivot character, it follows that:
\[
x[\delta(k+1)] = a.
\]
\end{proof}

A similar result holds for positions preceding a pivot occurrence.

\begin{lemma}\label{lemma-sum-left}
Let $x[\delta(i)] = a$ for $0 \leq i \leq \bar{m}-1$. Then:
\[
x[\delta(i) - \sum_{j=k}^{i-1} \bar{x}[j]] = a, \quad \text{for } 0 \leq k \leq i-1.
\]
\end{lemma}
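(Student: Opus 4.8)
The plan is to mirror the proof of Lemma~\ref{lemma-sum-right}, this time telescoping the character-distance sum toward the left. First I would recall the defining relation of the \texttt{CDS} representation, namely $\bar{x}[j] = \Delta(j) = \delta(j+1) - \delta(j)$ for every valid index $j$. I would also check the index bounds at the outset: since $0 \le k \le i-1$ and $i \le \bar{m}-1$, every summand $\bar{x}[j]$ with $k \le j \le i-1$ is defined, and both $\delta(i)$ and $\delta(k)$ lie in the domain of $\delta$.

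Next I would substitute the defining relation into the sum and exploit the cancellation of consecutive terms:
\[
\sum_{j=k}^{i-1} \bar{x}[j] = \sum_{j=k}^{i-1} \bigl(\delta(j+1) - \delta(j)\bigr) = \delta(i) - \delta(k).
\]
Substituting this back into the argument of $x$ collapses it to a single pivot position:
\[
\delta(i) - \sum_{j=k}^{i-1} \bar{x}[j] = \delta(i) - \bigl(\delta(i) - \delta(k)\bigr) = \delta(k).
\]
Since $\delta(k)$ is by definition the position of the $(k+1)$-th occurrence of the pivot character $a$ in $x$, it follows immediately that $x[\delta(k)] = a$, which is exactly the claim.

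I do not expect any genuine obstacle here; the argument is the left-facing counterpart of the telescoping computation already used for Lemma~\ref{lemma-sum-right}, and in fact the hypothesis $x[\delta(i)] = a$ is needed only to guarantee that $\delta(i)$ is a valid pivot index, since $x[\delta(k)] = a$ holds for any pivot position regardless. The one point that merits care is the index bookkeeping just described, to ensure both the upper range $j = i-1$ and the empty-sum boundary case (which would correspond to $k = i$ and simply recover the hypothesis itself) stay within the legal index set of $\bar{x}$ and $\delta$.
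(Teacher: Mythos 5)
Your proposal is correct and follows essentially the same route as the paper: telescoping $\sum_{j=k}^{i-1}\bar{x}[j]$ into $\delta(i)-\delta(k)$ so that the argument of $x$ collapses to the pivot position $\delta(k)$. Your write-up is in fact slightly more careful than the paper's, whose displayed expansion contains a typographical slip ($\delta(i)-\delta(i+1)$ where $\delta(i)-\delta(i-1)$ is meant), and your remark that the hypothesis $x[\delta(i)]=a$ serves only to validate the index is a fair observation.
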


\begin{proof}
By summing the distances between previous occurrences of the pivot character, we obtain:
\[
x[\delta(i) - (\delta(k+1) - \delta(k) + \dots + \delta(i) - \delta(i+1))] = x[\delta(k)].
\]
Since $\delta(k)$ corresponds to an occurrence of the pivot character, we conclude that:
$
x[\delta(k)] = a.
$
\end{proof}

\begin{obs}\label{obs1}
Let $k = \sum_{i=0}^{\per(\bar{x})-1} \bar{x}[i]$. By definition of $\per(\bar{x})$, it follows that:
\[
\sum_{i=0}^{\per(\bar{x})-1} \bar{x}[j + i] = k, \quad \text{for } 1 \leq j \leq \bar{m} - \per(\bar{x}).
\]
\end{obs}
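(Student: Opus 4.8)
The plan is to read $\per(\bar{x})$ purely as a combinatorics-on-words object: writing $p = \per(\bar{x})$, its defining property (stated in the Preliminaries for strings, here applied to the numeric sequence $\bar{x}$) is that $\bar{x}[i] = \bar{x}[i+p]$ for every index $i$ with $0 \leq i < \bar{m} - p$. The observation asserts that every block of $p$ consecutive entries of $\bar{x}$ has the same sum $k$. Since $\bar{x}$ is a numeric sequence, this is a purely additive consequence of the entrywise periodicity, and I would establish it by a sliding-window telescoping argument rather than by expanding each sum from scratch.

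Concretely, I would introduce the window-sum function $S(j) = \sum_{i=0}^{p-1} \bar{x}[j+i]$ for $0 \leq j \leq \bar{m} - p$, so that $S(0) = k$ by the definition of $k$, and the assertion is exactly $S(j) = k$ for $1 \leq j \leq \bar{m} - p$. The key step is the single-shift identity
\[
S(j) - S(j-1) = \bar{x}[j+p-1] - \bar{x}[j-1],
\]
which follows by cancelling the $p-1$ overlapping terms $\bar{x}[j], \dots, \bar{x}[j+p-2]$ common to the two consecutive windows. Applying the periodicity relation with $i = j-1$ gives $\bar{x}[j-1] = \bar{x}[j-1+p] = \bar{x}[j+p-1]$, so the right-hand side vanishes and $S(j) = S(j-1)$.

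From $S(j) = S(j-1)$ holding throughout the range, a one-line induction (equivalently, a telescoping sum) yields $S(j) = S(0) = k$, which is precisely the claim. The argument is essentially routine; the only point that demands care is the index bookkeeping, namely verifying that each window stays in bounds ($j+p-1 \leq \bar{m}-1$) and that the periodicity relation is invoked only for indices $i < \bar{m} - p$, where it is guaranteed to hold. Both conditions reduce to $j \leq \bar{m} - p$, which is exactly the stated hypothesis on $j$, so no boundary case is lost and the telescoping is valid across the whole range. I do not expect any genuine obstacle beyond this indexing discipline.
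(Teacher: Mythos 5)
Your argument is correct: the sliding-window identity $S(j)-S(j-1)=\bar{x}[j+p-1]-\bar{x}[j-1]$ combined with the periodicity relation at index $i=j-1$ (valid precisely for $1\le j\le \bar{m}-\per(\bar{x})$) gives $S(j)=S(0)=k$, and your index bookkeeping checks out. The paper states this observation without proof, treating it as immediate from the definition of the period, so your telescoping argument simply supplies the routine details the authors omit; there is nothing to compare against and no gap.
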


Let $x[0]$ be the pivot character. If the first and last character of $x$ are the same ($x[0]=x[m-1]$), the period is determined directly from the sum of the first $\per(\bar{x})$ elements of $\bar{x}$. If they are different ($x[0]\ne x[m-1]$), additional conditions must be checked.

\begin{figure}[H]
    \begin{center}
\begin{tabular}{@{}l@{\quad}*{13}{p{9pt}}l@{}}
$i$             &0&1&2&3&4&5&6&7&8&9&10\\
\hline
$x[i]$  &\sa a&\sa b&\sa a&\sa a&\sa b&\sa a&\sa b&\sa a&\sa a&\sa b&\sa a\\
\hline
$\bar{x}[i]$    &2&1&2&2&1&2\\
\hline
$\border_{\bar{x}}[i]$    &-1&0&0&1&1&2&3\\
\end{tabular}
    \end{center}
    \caption{\label{fig:example-period-cds}
    Border array of the \texttt{CDS} representation of $x=\sa{abaababaaba}$ of length $11$ with pivot \sa{a}.
    Then $\per(\bar{x}) = 6-\border_{\bar{x}}[6]=3$, thus $\per(x) = \bar{x}[0]+\bar{x}[1]+\bar{x}[2]=2+1+2=5$.
    }
\end{figure}

The table in Figure~\ref{fig:example-period-cds} illustrates the computation of the period for $x=\sa{abaababaaba}$. The corresponding \texttt{CDS} representation, denoted by $\bar{x}$, consists of the distances between consecutive occurrences of the pivot character \sa{a}. The border array of $\bar{x}$ is computed, and from it, the period of $\bar{x}$ is determined as $\per(\bar{x}) = 3$. Using this result, we obtain the period of $x$ by summing the first three elements of $\bar{x}$, yielding $\per(x) = 2+1+2 = 5$. This example demonstrates how the periodic properties of $\bar{x}$ provide a direct way to compute the period of $x$ efficiently.

If $x[0] \ne x[m-1]$, the structure of the borders must be analyzed further. Let $x=auav$, where $u \in \Sigma^*$ and $v=b^k$ does not contain any occurrences of $a$. The border of $x$ can be computed by examining the borders of $aua$.

\begin{lemma}\label{lemma-border}
The border of $auav$ is given by $u'v'$, where $u'=\Border(aua)$ is the longest border of $aua$ followed by $v'=b^{k'}$ such that $k\le k'$.
\end{lemma}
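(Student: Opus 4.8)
The plan is to characterize all borders of $x = auav$ directly and then read off the longest one, rather than trying to relate $\per(x)$ and $\per(\bar x)$ at this stage. First I would pin down the shape of an arbitrary non-empty border $w$ of $x$. Since $v = b^k$ is the maximal run of $b$'s at the right end of $x$ — the character $x[m-1-k]$ preceding it is the trailing $a$ of $aua$ — any suffix of $x$ of length at most $k$ is a block $b^\ell$, hence starts with $b$ and cannot equal a prefix of $x$ (which starts with $x[0]=a$); so a non-empty border satisfies $|w| \ge k+1$. Consequently $w$ contains position $m-1-k$, its maximal trailing run of $b$'s is exactly $b^k$, and writing $w = w_1 b^k$ the factor $w_1$ ends at position $m-1-k$, i.e. $w_1$ is a non-empty suffix of $x[0\pp m-1-k] = aua$ that ends in $a$.

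Next I would show $w_1$ is a border of $aua$ and establish the converse. As a prefix of $x$ of length $|w_1| = |w|-k < m-k = |aua|$, the factor $w_1$ is a prefix of $aua$; combined with the previous step it is both a prefix and a proper suffix of $aua$, hence a border. Conversely, I would verify that extending a border $w_1$ of $aua$ gives a border of $x$ exactly when $w_1 b^k$ is a prefix of $x$: the suffix condition is automatic, since $w_1$ being a suffix of $aua$ makes $w_1 b^k$ a suffix of $aua\,b^k = x$, while the prefix condition demands precisely that the $k$ characters of $x$ immediately after the prefix $w_1$ are all $b$ — equivalently that the maximal $b$-run $b^{k'}$ following $w_1$ in $x$ satisfies $k \le k'$. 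This is where the hypothesis $k \le k'$ enters, and it is the real content of the lemma: a border of $aua$ need not extend (for $x=\sa{abaabb}$ one has $\Border(aua)=\sa a$ followed by a single $b$, so $k'=1<k=2$ and $\Border(x)=\varepsilon$), which is exactly why the run-length condition cannot be dropped.

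Finally, since $|w| = |w_1| + k$ is strictly increasing in $|w_1|$, the longest border of $x$ arises from the longest border $u'$ of $aua$ that satisfies the extension condition; when $u' = \Border(aua)$ is itself followed by at least $k$ $b$'s, this yields $\Border(auav) = u'\,b^k = u'v'$ as claimed, and otherwise one reapplies the argument to the next shorter border via $\Border(\Border(aua))$. The step I expect to be the main obstacle is making the two-sided equivalence in the middle airtight — in particular justifying that the trailing $b$-run of \emph{every} border is exactly $k$ and that the only remaining degree of freedom is the choice of border of $aua$ — so that maximality within $aua$ transfers cleanly to maximality within $x$.
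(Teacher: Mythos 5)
Your proof is correct and rests on the same structural decomposition as the paper's --- every non-empty border of $x=auav$ has the form $w_1 b^k$ with $w_1$ a border of $aua$ --- but it is considerably more complete. The paper's proof consists only of the one-directional contradiction argument: if $|w|>|u'v'|$ then $w$ would induce a border of $aua$ longer than $\Border(aua)$; it never verifies that $u'v'$ is itself a border of $x$, which is precisely the direction where the hypothesis $k\le k'$ does real work. You supply that converse explicitly (a border $w_1$ of $aua$ extends to a border of $x$ if and only if the $b$-run following the prefix $w_1$ in $x$ has length at least $k$), you justify that the trailing run of every non-empty border is exactly $b^k$ so that maximality in $aua$ transfers cleanly to maximality in $x$, and your counterexample $x=\sa{abaabb}$ (where $\Border(aua)=\sa{a}$ fails to extend and $\Border(x)=\varepsilon$) exposes that the lemma as written is imprecise: the border of $auav$ arises from the longest border of $aua$ \emph{satisfying} the run-length condition, not from $\Border(aua)$ unconditionally. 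That refinement is exactly what the algorithm \textsc{borderCDS} implements when it skips borders with $\bar{x}[i]\le k$, so your two-sided version of the lemma is the one the rest of the paper actually relies on; the paper's own proof would need your converse direction to justify it.
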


\begin{proof}
Assume by contradiction that $w$ is a border of $x = auav$ and that $|w| > |u'v'|$. In this case, $w$ must be of the form $aw'av'$, where $aw'a$ is a border of $aua$ that is longer than $\Border(aua)$. This contradicts the assumption that $\Border(aua)$ is the longest border of $aua$, proving the claim.
\end{proof}

\figurename{}~\ref{fig:example-period-cds2} illustrates the \texttt{CDS} representation of the string $x=\sa{abbababbabb}$ and the steps involved in computing its period. The figure shows the border array of the \texttt{CDS} representation, which allows us to determine the period efficiently. By analyzing the values in the border array and summing the appropriate distances from the \texttt{CDS} representation, we obtain $\per(x) = \bar{x}[0] + \bar{x}[1] + \bar{x}[2] = 3 + 2 + 3 = 8$. This confirms that the period of $x$ is 8, as derived from its compact \texttt{CDS} encoding.

%Algorithm \Call{periodCDSbinary}{x,m} presented in Figure~\ref{fig:algo-periodCDSbinary} implements this strategy. 
When $x$ is of the form $auab^k$, checking if a border of $aua$ is followed by $b^{k'}$ with $k \leq k'$ can be determined in constant time by examining values in $\bar{x}$. Thus the worst-case complexity of the algorithm implementing this method is the same as the worst-case time complexity of the algorithm implementing the classical method. Thus it is linear in the length of $\bar{x}$, which is at most $m$.

\begin{figure}[!t]
    \begin{center}
\begin{tabular}{@{}l@{\quad}*{13}{p{9pt}}l@{}}
$i$             &0&1&2&3&4&5&6&7&8&9&10\\
\hline
$x[i]$  &\sa a&\sa b&\sa b&\sa a&\sa b&\sa a&\sa b&\sa b&\sa a&\sa b&\sa b\\
\hline
$\bar{x}[i]$    &3&2&3\\
\hline
$\border_{\bar{x}}[i]$    &-1&0&0&1\\
\end{tabular}
    \end{center}
    \caption{\label{fig:example-period-cds2}
    Border array of the \texttt{CDS} representation of $x=\sa{abbababbabb}=\sa{abbababbab}^2$ of length $11$ with pivot \sa{a}
    thus
    $k = 2$.
    $\border_{\bar{x}}[3]=1$ but $\bar{x}[\border_{\bar{x}}[3]=1]=2 \leq k=2$.
    $\border_{\bar{x}}[1]=0$ and $\bar{x}[\border_{\bar{x}}[1]=0]=3 > k=2$.
    Since we use the border of $\bar{x}$ of length $\border_{\bar{x}}[1]=0$, we use the period of $\bar{x}$ which is equal to $|\bar{x}|-\border_{\bar{x}}[1]=3$, then we sum the first $3$ elements of $\bar{x}$ to get $\per(x) = \bar{x}[0]+\bar{x}[1]+\bar{x}[2]=3+2+3=8$.
    }
\end{figure}

\begin{figure}[!t]
    \centering
    \begin{minipage}[t]{0.55\textwidth} % Align at the top
        \vspace{0pt} % Ensure alignment at the top
        \begin{algo}{periodCDS}{x,m}
            \SET{\bar{x}}{\text{CDS version of $x$}}
            \SET{\bar{m}}{|\bar{x}|}
            \SET{k}{\text{length of the longest suffix of $x$}
            \EXT{\text{that does not contain any $a$}}}
            \SET{bor}{\border_{\bar{x}}}
            \SET{\dot{x}[0]}{\bar{x}[0]}
            \DOFORI{i}{1}{\mxbar}
                \SET{\dot{x}[i]}{\dot{x}[i-1] + \bar{x}[i]}
            \OD
            \SET{(b,\bar{b})}{\Call{borderCDS}{x,m, \bar{x}, \bar{m},k, \cumsum, bor}}
            \SET{\period}{m-b}
            \RETURN{\period}
        \end{algo}
    \end{minipage}
    \hfill
    \begin{minipage}[t]{0.44\textwidth} % Align at the top
        \vspace{0pt} % Ensure alignment at the top
        \begin{algo}{borderCDS}{x,m,\bar{x},\bar{m},\suflen,\cumsum,bor}
            \SET{i}{\bar{m}}
            \DOWHILE{i>-1}
                \SET{i}{bor[i]}
                \DOWHILE{i>-1 \mbox{ and }\bar{x}[i] \le \suflen}
                    \SET{i}{bor[i]}
                \OD
                \IF{i > -1}
                    \SET{\bar{p}}{\bar{m}-i}
                    \SET{p}{\cumsum[\bar{p}-1]}
                    \SET{b}{m-p}
                    \label{line-quadra}
                    \IF{x[0\pp b-1]= \EXT{x[m-b\pp m-1]}}
                        \RETURN{(b,i)}
                    \FI
                \FI
                \RETURN{(0,-1)}
            \OD
        \end{algo}
    \end{minipage}
    \caption{
        \label{fig:algo-periodCDSbinary}
        Computation of the period of a string $x$ of length $m$ with its \texttt{CDS} representation $\bar{x}$ of length $\bar{m}$  (left) computed using the border array
        of $\bar{x}$ and its cumulative sum array $\cumsum$ (right).
        The variable $k$ corresponds to the number of non-pivot symbols after the rightmost pivot symbol in $x$. The algorithm borderCDS returns  the length of the border of $x$ and the length of the border of $\bar{x}$ that corresponds to the period of $\bar{x}$ that can enable to compute the period of $x$.
    }
\end{figure}

Let us now consider the case where the alphabet contains more than two distinct characters. Suppose that $x$ is of the form $x = auav$, where $u \in \Sigma^*$ and $v$ is a substring that contains no occurrences of the pivot character $a$. In this scenario, the longest candidate border of $x$ computed by the previous algorithm must be explicitly verified to determine whether it is a valid border of $x$.

This candidate border should be followed by a substring $v'$, which, like $v$, contains no occurrences of $a$ and is at least as long as $v$, i.e., $|v'| \geq |v|$.
The algorithm then checks whether the candidate border occurs
 both as a prefix and a suffix of $x$.
If this condition holds, the corresponding border length enables to compute a valid period for $x$. If not, the process continues iteratively, examining the next longest candidate border and its associated suffix $v'$, until either a valid period is found or no further non-empty borders exist.

If no valid period is identified through this process, it follows that $x$ is borderless, and its period is simply $m$, the length of the entire string.

The algorithm presented in \figurename{}~\ref{fig:algo-periodCDSbinary} efficiently implements this method for
 computing the length of the longest border of $x$.
Since several prefix sums of $\bar{x}$ may be required in the computation, the algorithm uses a precomputed cumulative sum array over $\bar{x}$.
This preprocessing step ensures that the necessary sums can be retrieved in constant time, optimizing the overall efficiency of the method.
As in the case of exact string matching with sampling,
 the method only computes border candidates that need to be
 naively checked (line~\ref{line-quadra})
 leading to an overall quadratic
 worst case time complexity.
However, experimental results show that this method is faster than the classical one.

A slight modification of the approach  (see~\figurename{}~\ref{fig:algo-bordersCDS})
 allows it to be extended to compute not only the length of the longest border of $x$ but also all the length of its non-empty non-periodic border.
The algorithm can enable to systematically compute and store these border lengths in decreasing order.

\begin{figure}[!t]
    \begin{center}
\begin{algo}{bordersCDS}{x,m}
  \SET{B}{\emptyset}
  \SET{\bar{x}}{\text{CDS version of $x$}}
    \SET{\bar{m}}{|\bar{x}|}
    \SET{k}{\text{length of the longest suffix of $x$}
    \EXT{\text{that does not contain any $a$}}}
  \SET{\cumsum[0]}{\bar{x}[0]}
  \DOFORI{i}{0}{\mxbar-1}
    \SET{\cumsum[i]}{\cumsum[i-1]+\bar{x}[i]}\label{linecumsum}
  \OD
  \SET{bor}{\border_{\bar{x}}}
  \SET{i}{\mxbar}
  \SET{M}{m}
  \DOWHILE{i > -1}
    \SET{(b,i)}{\Call{borderCDS}{x,m,\bar{x},\bar{m},k,\cumsum,\border_{\bar{x}}}}
    \IF{b > M/2}
      \SET{b}{M-b+M\bmod{M-i}}
    \FI
    \DOWHILE{i > -1 \mbox{ and } \cumsum[i] \ge b}
      \SET{i}{i-1}
    \OD
    \SET{\bar{m}}{i+1}
    \SET{m}{i}
    \IF{b > 0}
      \SET{B}{B\cup\{b\}}
    \FI
  \OD
  \RETURN{B}
\end{algo}
    \end{center}
    \caption{
    \label{fig:algo-bordersCDS}
    Computation of the non-empty non-periodic borders of string $x$ of length $m$ using its \texttt{CDS} representation. This implementation uses the algorithm \textsc{borderCDS} described in Fig.~\ref{fig:algo-periodCDSbinary}.
    }
\end{figure}

\section{Computing the shortest cover of a string from its \texttt{CDS} representation}
\label{sect-cover}

The classical computation of the shortest cover of a string $x$ relies on its border array. Since borders play a fundamental role in determining covers, the approach used to compute borders in the \texttt{CDS} representation can be adapted to efficiently determine the shortest cover as well. 

%Given that the border length computation algorithm from the previous section derives borders using the border array of the \texttt{CDS} representation, we extend this method to identify the shortest cover.
The key idea remains the same: covers, like periods, are closely tied to the border structure of a string.
The shortest cover of $x$ must be a border that also satisfies the covering condition, ensuring that every position in $x$ is part of at least one occurrence of the cover.

The algorithm consists of merely identifying borders and verifying which of these borders serve as valid cover for $x$. This is accomplished by leveraging the border array of the \texttt{CDS} representation of $x$, efficiently determining the shortest cover without needing to explicitly reconstruct the full string.

By utilizing the properties of the \texttt{CDS} representation, the adapted algorithm ensures that the computation of the shortest cover remains efficient. The method minimizes redundant checks and avoids unnecessary recomputation by systematically examining the borders derived from the \texttt{CDS} representation. In case of binary alphabets this can be done in linear worst-case time complexity, while for larger alphabets this achieves quadratic worst-case time complexity, however experimental results demonstrate that this solution is faster in practice compared to the classical computation.

%\begin{figure}
%    \begin{center}
%\begin{algo}{shortestCoverCDS}{x,m}

%  \SET{(\bar{x},\last)}{\Call{computeCDS}{x,m}}
%  \SET{\border_{\bar{x}}}{\Call{Borders}{\bar{x}, |\bar{x}|}}
%  \SET{S}{\Call{emptyStack}{}}
%  \CALL{Push}{S,|\bar{x}|}
%  \SET{n}{|\bar{x}|}
%  \SET{b}{0}
%    \DOFORI{j}{m-\border_{\bar{x}}[n]}{m}
%        \SET{b}{b+\bar{x}[i]}     
%    \OD
%  \SET{i}{b}
%  \DOWHILE{i \ne 0}
%    \IF{i > m/2}
%      \CALL{Push}{S,n-i+n\bmod (n-i)}
%    \ELSE
%      \CALL{Push}{S,i}
%      \ACT{}
%    \FI
%      \SET{b}{0}
%        \DOFORI{j}{m-\border_{\bar{x}}[n]}{m}
%            \SET{b}{b+\bar{x}[i]}     
%        \OD
%    \SET{n}{i}
%    \SET{i}{b}
%  \OD
%  \SET{b}{\Call{Pop}{S}}
%  \DOWHILE{|S|>0}
%\label{line:pos1cds}    \SET{\listpos}{\pos(x[0\pp b-1],x[0\pp \Call{Top}{S}-1])\cup (m)}
%\label{line:iscovering1cds}    \DOWHILE{|S|> 1 \mbox{ and } \Call{isCovering}{\listpos,b}}
%      \CALL{Pop}{S}
%\label{line:pos2cds}      \SET{\listpos}{\pos(x[0\pp b-1],x[0\pp \Call{Top}{S}-1])\cup (m)}
%    \OD
%\label{line:iscovering2cds}    \IF{\Call{isCovering}{\pos,b}}
%      \RETURN{x[0\pp b-1]}
%    \ELSE
%      \SET{b}{\Call{Pop}{S}}
%    \FI
%  \OD
%  \RETURN{x[0\pp b-1]}
%\end{algo}
%    \end{center}
%    \caption{\label{fig:algo-shortestCovercds}
%    Classical computation of the shortest cover of string $x$ of length $m$, using the \texttt{CDS}.
%    }
%\end{figure}

\section{Experimental results}
\label{sect-expe}

In this section, we present experimental results to evaluate the performance of the algorithms proposed in this paper. Our objective is to assess the computational efficiency of the methods that leverage the \texttt{CDS} representation for computing the period and shortest cover of a string.

The algorithms have been implemented in the \textsf{C} programming language and tested using the \textsc{Smart} tool~\cite{FaroLBMM16}. The experiments were conducted on a MacBook Pro equipped with a 2.7 GHz Intel Core i7 processor, four cores, 16 GB of 2133 MHz LPDDR3 RAM, a 256 KB L2 cache, and an 8 MB L3 cache. The compilation was performed using the \textsf{gcc} compiler with the \texttt{-O3} optimization flag to ensure optimal execution speed. \footnote{The \textsc{Smart} tool is available for download at \url{http://www.dmi.unict.it/~faro/smart/} or \url{https://github.com/smart-tool/smart}.}

To obtain statistically significant results, each algorithm was executed over 10,000 different runs, and the average execution time was recorded. The dataset used in our experiments consists of 100MB of English text extracted from the \emph{Pizza and Chili} corpus~\cite{pizzachili}, which is widely adopted in the field of string processing research. The experiments were performed on input strings of varying lengths, ranging from $10^5$ to $10^7$ characters.

For consistency, the pivot character selected for constructing the \texttt{CDS} representation was always the first character of the input string. The additional memory required to store the \texttt{CDS} representation varied between $3\%$ and $15\%$ of the size of the original string, depending on the structure of the input data.

\begin{figure}[!t]
    \centering
    \begin{minipage}[t]{0.48\textwidth}
        \centering
        \vspace{0pt} % Ensures top alignment
        \begin{tikzpicture}
            \begin{axis}[
                width=\textwidth,
                height=5cm,
                ymajorgrids=true,
                xlabel={Size ($10^n$)},
                ylabel={Time (ms)},
                xtick={4,5,6,7,8},
                xmode=log,
                xticklabels={4,5,6,7,8},
                ymode=log,
                log basis y=10,
                log ticks with fixed point,
                legend style={font=\scriptsize},
                ]
                
                \addplot[thick,blue,smooth] coordinates {
                    (5, 0.07) (6, 0.64) (7, 8.40)
                };
                \addplot[thick,red,smooth] coordinates {
                    (5, 0.04) (6, 0.40) (7, 4.56)
                }; 
            \end{axis}
        \end{tikzpicture}
    \end{minipage}
    \hfill
    \begin{minipage}[t]{0.48\textwidth}
        \centering
        \vspace{0pt} % Ensures top alignment
        \begin{tikzpicture}
            \begin{axis}[
                width=\textwidth,
                height=5cm,
                xmode=log,
                ymajorgrids=true,
                xlabel={Size ($10^n$)},
                xtick={4,5,6,7,8},
                xticklabels={4,5,6,7,8},
                ymode=log,
                log basis y=10,
                log ticks with fixed point,
                legend style={font=\scriptsize},
                ]
                
                \addplot[thick,blue,smooth] coordinates {
                    (5, 0.07) (6, 0.61) (7, 8.32)
                };
                \addplot[thick,red,smooth] coordinates {
                    (5, 0.02) (6, 0.23) (7, 2.84)
                }; 
                \legend{\texttt{STD}, \texttt{CDS}}
            \end{axis}
        \end{tikzpicture}
    \end{minipage}
        \caption{
        (Left)
        Average execution times of the standard shortest cover computation algorithm and its optimized version using the \texttt{CDS} representation over 10,000 runs.\\
        (Right) Average execution times of the standard period computation algorithm and its optimized version using the \texttt{CDS} representation over 10,000 runs.}
        \label{fig:algorithms_performance_period}
\end{figure}

The results in Figure~\ref{fig:algorithms_performance_period} indicate that the method for computing the period using the \texttt{CDS} representation achieves a speedup between $38\%$ and $43\%$ compared to the standard approach. This improvement is attributed to the more compact nature of the \texttt{CDS} representation, which reduces the number of operations required to determine the periodicity of the string.

Similarly, Figure~\ref{fig:algorithms_performance_period} also presents the execution times for computing the shortest cover of a string using both the standard method and the optimized \texttt{CDS}-based approach. The results show that the \texttt{CDS} representation leads to a significant performance gain, achieving speedups between $63\%$ and $72\%$. This improvement is due to the ability of the \texttt{CDS} representation to efficiently encode the positions of relevant characters, allowing the algorithm to determine covers with fewer operations.

These experimental results demonstrate the efficiency of the \texttt{CDS}-based approaches for computing both the period and shortest cover of a string. By leveraging the compactness and structure of the \texttt{CDS} representation, these methods significantly reduce computation time while maintaining correctness. The proposed techniques are particularly well-suited for applications involving large-scale text processing, where reducing computational overhead is critical.

\section{Conclusion}
\label{sect-conc}

In this paper, we demonstrated that the partial index built using the \emph{Characters-Distance-Sampling} (\texttt{CDS}) algorithm, originally designed to enhance the efficiency of string matching algorithms, can also be effectively utilized to compute regularities in strings. This novel approach leverages the compact and structured nature of the \texttt{CDS} representation to extract fundamental properties of strings in a computationally efficient manner.

Specifically, we introduced optimized methods for solving two well-studied problems in the field of string regularities: the computation of the \emph{period} and the detection of \emph{quasiperiodicity} or \emph{covers}. By adapting classical techniques to work with the \texttt{CDS} representation, we developed algorithms that significantly outperform traditional approaches. Experimental results confirm the effectiveness of our methods, showing speedups ranging from $38\%$ to $43\%$ for period computation and from $63\%$ to $72\%$ for shortest cover detection. These improvements demonstrate the potential of the \texttt{CDS} representation not only for accelerating string matching but also for efficiently identifying structural properties of strings.

Our findings open the door for further research into the use of \texttt{CDS}-based methods for detecting other types of regularities in strings. Future work could explore the application of this approach to problems such as the computation of \emph{seeds}, \emph{k-covers}, or other forms of repetitive structures. Investigating the theoretical bounds of this technique, as well as its practical implications in bioinformatics, data compression, and text processing, could lead to new advancements in the field.

The same techniques can be easily adapted to strings given in other kind of representation such as the run-length encoding.

By demonstrating that a representation initially designed for string matching can be repurposed to analyze structural properties of strings, this work provides a new perspective on the versatility of distance-based indexing techniques. The promising results obtained in this study suggest that \texttt{CDS}-based methods could be further developed to enhance the efficiency of a wide range of string-processing applications.

\bibliographystyle{splncs04} 
\bibliography{bibliography.bib}
\end{document}